\newcommand{\limp}{\rightarrow}
\newcommand{\Nat}{\mathbb{N}}
\newcommand{\FO}{\mathsf{FO}}
\newcommand{\FOC}{\mathsf{FOC}}
\newcommand{\dom}{\mathsf{dom}}
\newcommand{\eqe}{\sim_{\textrm{e}}}
\newcommand{\calA}{\mathcal{A}}
\newcommand{\calP}{\mathcal{P}}
\newcommand{\frakd}{\mathfrak{d}}
\newcommand{\frakA}{\mathfrak{A}}
\begin{document}

\title[Cardinality and counting quantifiers on omega-automatic structures]
{Cardinality and counting quantifiers on omega-automatic structures}

\author[mgi]{\L{}.~Kaiser}{\L{}ukasz Kaiser}
\author[uoa]{S. Rubin}{Sasha Rubin}
\author[mgi]{V. B\'ar\'any}{Vince  B\'ar\'any}

\address[mgi]{Mathematische Grundlagen der Informatik, RWTH Aachen}
\email{{kaiser,vbarany}@informatik.rwth-aachen.de} 

\address[uoa]{Department of Computer Science, University of Auckland}
\email{rubin@cs.auckland.ac.nz}

\keywords{ $\omega$-automatic presentations, $\omega$-semigroups, $\omega$-automata}

\bibliographystyle{plain}

%
%

\begin{abstract}
We investigate structures that can be represented by omega-automata, so called
omega-automatic structures, and prove that relations defined over such
structures in first-order logic expanded by the first-order quantifiers `there
exist at most $\aleph_0$ many', 'there exist finitely many' and 'there exist
$k$ modulo $m$ many' are omega-regular. The proof identifies certain algebraic
properties of omega-semigroups.

As a consequence an omega-regular equivalence relation of countable index has
an omega-regular set of representatives. This implies Blumensath's conjecture
that a countable structure with an $\omega$-automatic presentation can be
represented using automata on finite words. This also complements a very recent
result of Hj\"orth, Khoussainov, Montalban and Nies showing that 
there is an omega-automatic structure which has no injective presentation.

\end{abstract}




%
%

\maketitle

\stacsheading{2008}{385-396}{Bordeaux}
\firstpageno{385}


\vskip-0.3cm
\section{Introduction} \label{sec_intro}


\setlength{\parskip}{0pt}


Automatic structures were introduced in \cite{Hod83} and later again in
\cite{KN95,BG00} along the lines of the B\"uchi-Rabin equivalence of automata and
monadic second-order logic. The idea is to encode elements of a
structure $\frakA$ via words or labelled trees (the codes need not be unique)
and to represent the relations of $\frakA$ via synchronised automata. This way
we reduce the first-order theory of $\frakA$ to the monadic second-order theory
of one or two successors.  In particular, the encoding of relations defined in
$\frakA$ by first order formulas are also regular, and automata for them can be
computed from the original automata. Thus we have the fundamental fact that the
first-order theory of an automatic structure is decidable. 

Depending on the type of elements encoding the structure, the following natural
classes of structures appear: automatic (finite words), $\omega$-automatic
(infinite words), tree-automatic (finite trees), and $\omega$-tree automatic
(infinite trees). Besides the obvious inclusions, for instance that automatic
structures are also $\omega$-automatic, there are still some some outstanding
problems.  For instance, a presentation over finite words or over finite trees 
can be transformed into one where each element has a unique representative. 

Kuske and Lohrey \cite{KL06} point out an $\omega$-regular
equivalence relation (namely $\eqe$ stating that two infinite words are
position-wise eventually equal) with no $\omega$-regular set of representatives.
Thus, unlike the finite-word case, injectivity can not generally be achieved by
selecting a regular set of representatives from a given presentation. 
In fact, using topological methods it has recently been shown \cite{HKMNman} 
that there are omega-automatic structures having no injective presentation.
However, we are able to prove that every omega-regular equivalence relation
having only countably many classes does allow to select an omega-regular
set of unique representants. Therefore, every countable omega-automatic structure 
does have an injective presentation.

A related question raised by Blumensath \cite{Blu99} is whether every countable
$\omega$-automatic structure is also automatic. In Corollary~\ref{coroll_cnt} 
we confirm this by transforming the given presentation into an injective one, 
and then noting that an injective $\omega$-automatic presentation of a countable 
structure can be ``packed'' into one over finite words.

All these results rest on our main contribution: a characterisation of when
there exist countably many words $x$ satisfying a given formula with parameters
in a given $\omega$-automatic structure $\frakA$ (with no restriction on the cardinality
of the domain of $\frakA$ or the injectivity of the presentation). The
characterisation is first-order expressible in an $\omega$-automatic
presentation of an extension of $\frakA$ by $\eqe$.
Hence we obtain an extension of the fundamental fact for $\omega$-automatic
structures to include cardinality and counting quantifiers such as 'there exists
(un)countably many', 'there exists finitely many', and 'there exists $k$ modulo
$m$ many'.  This generalises results of Kuske and Lohrey \cite{KL06} who achieve
this for structures with \emph{injective} $\omega$-automatic presentations.
 

\vskip-0.3cm
\section{Preliminaries} \label{sec_prelim}


By countable we mean finite or countably infinite. 
Let $\Sigma$ be a finite alphabet. With $\Sigma^\ast$ and $\Sigma^\omega$
we denote the set of finite, respectively $\omega$-words over $\Sigma$.
The length of a word $w \in \Sigma^\ast$ is denoted by $|w|$,
the empty word by $\varepsilon$, 
and for each $0 \leq i < |w|$ the $i$th symbol of $w$ is written as $w[i]$. 
Similarly $w[n,m]$ is the factor $w[n]w[n+1]\cdots w[m]$ and $w[n,m)$ is defined
by $w[n,m-1]$.
Note that we start indexing with $0$ and that for $u \in \Sigma^\ast$ we
denote by $u^n$ the concatenation of $n$ number of $u$s, in particular
$u^\omega \in \Sigma^\omega$. 

We consider relations on finite and $\omega$-words recognised by multi-tape
finite automata operating in a synchronised letter-to-letter fashion.
Formally, \emph{$\omega$-regular relations} are those accepted by some
finite non-deterministic automaton $\calA$ with B\"uchi, parity or Muller
acceptance conditions, collectively known as $\omega$-automata,
and having transitions labelled by $m$-tuples of symbols of $\Sigma$.
Equivalently, $\calA$ is a usual one-tape $\omega$-automaton over the
alphabet $\Sigma^m$ accepting the \emph{convolution} $\otimes\vec{w}$ of
$\omega$-words $w_1,\ldots,w_m$ defined by 
$\otimes\vec{w}[i] = (w_1[i], \ldots, w_m[i])$ for all $i$.

Words $u,v \in \Sigma^{\omega}$ have {\em equal ends}, written $u \eqe v$, if
for almost all $n \in \Nat$, $u[n] = v[n]$.  This is an important
$\omega$-regular equivalence relation. We overload notation so that for $S,T
\subset \Nat$ we write $S \eqe T$ to mean for almost all $n \in \Nat$, $n \in S
\iff n \in T$.
 
\begin{example}\label{ex_1} The non-deterministic B\"uchi automaton depicted in
Fig. \ref{fig_ex1} accepts the equal-ends relation on alphabet $\{0,1\}$. 

\begin{figure}[h]
\begin{displaymath}
\xymatrix{
   1 \ar@(ul,ur)^{\binom{0}{0},\binom{0}{1},\binom{1}{0},\binom{1}{1}} 
     \ar [rr]^{\binom{0}{0},\binom{1}{1}} & *{} & 
   2 \ar@(ul,ur)^{\binom{0}{0},\binom{1}{1}} &  F = \{2\}, I = \{1\} \\
}
\end{displaymath}
\caption{An automaton for the equal ends relation $\eqe$.}
\label{fig_ex1}
\end{figure}
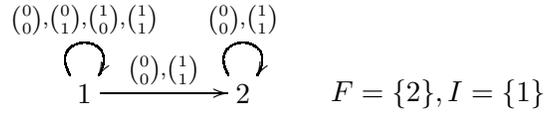
\end{example}

In the case of finite words one needs to introduce a padding
end-of-word symbol $\Box \not\in \Sigma$ 
to formally define convolution of words of different length. 
For simplicity, we shall identify each finite word $w \in \Sigma^\ast$
with its infinite padding $w^{\Box} = w \Box^\omega \in \Sigma_\Box^\omega$ 
where $\Sigma_\Box = \Sigma \cup \{\Box\}$. 
To avoid repeating the definition of automata for finite words,
we say that a $m$-ary relation $R \subseteq (\Sigma^\ast)^m$ 
is \emph{regular} (\emph{synchronised rational}) whenever 
it is $\omega$-regular over $\Sigma_\Box$.

\vskip-0.3cm
\subsection{Automatic structures}

We now define what it means for a relational structure (we implicitly replace
any structure with its relational counterpart) to have an ($\omega$-)automatic
presentation. 

\begin{definition}[($\omega$-)Automatic presentations] \label{def_as} ~\\
Consider a relational structure $\frakA = (A, \{R_i\}_i)$ with universe
$\dom(\frakA) = A$ and relations $R_i$.
A tuple of $\omega$-automata 
$\frakd = (\calA, \calA_\approx, \{\calA_i\}_i)$
together with a surjective naming function $f : L(\calA) \to A$
constitutes an \emph{($\omega$-)automatic presentation} of $\calA$
if the following criteria are met: 
\begin{enumerate}[(i)]
\item the equivalence, denoted $\approx$, and defined by $\{ (u,w) \in L(\calA)^2 \mid f(u) = f(w) \}$ 
   is recognised by $\calA_\approx$,
\item every $L(\calA_i)$ has the same arity as $R_i$,
\item $f$ is an isomorphism between 
   $\frakA_\frakd = (L(\calA), \{L(\calA_i)\}_i) /_{\approx}$ and $\frakA$.
\end{enumerate}
The presentation is said to be \emph{injective} whenever $f$ is, 
in which case $\calA_\approx$ can be omitted.
\end{definition}

The relation $\approx$ needs to be a congruence of the structure 
$(L(\calA), \{L(\calA_i)\}_i)$ for item (iii) to make sense. 
In case $L(\calA)$ only consists of words of the form $w^\Box$ where 
$w \in \Sigma^\ast$, we say that the presentation is {\em automatic}.  
Call a structure {\em $(\omega-)$automatic} if it has an ($\omega$-)automatic 
presentation.




The advantage of having an ($\omega$-)automatic presentation of a structure 
lies in the fact that first-order ($\FO$) formulas can be effectively evaluated
using classical automata constructions. This is expressed by the following 
fundamental theorem.

\begin{theorem} \label{thrm_fo}
   (Cf. \cite{Hod83}, \cite{KN95}, \cite{BG04}.) ~\\[-1 em]
\begin{enumerate}[(i)]
\item There is an effective procedure that given an $(\omega-)$automatic
presentation $\frakd,f$ of a structure $\frakA$, and given a $\FO$-formula
$\varphi(\vec{a},\vec{x})$ with parameters $\vec{a}$ from $\frakA$ (defining a
$k$-ary relation $R$ over $\frakA$), constructs a $k$-tape synchronous
$(\omega-)$automaton recognising $f^{-1}(R)$.

\item The $\FO$-theory of every $(\omega-)$automatic structure is decidable. 
\item The class of $(\omega-)$automatic structures is closed under 
  $\FO$-interpretations 
\end{enumerate}
\end{theorem}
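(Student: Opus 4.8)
The plan is to prove Theorem~\ref{thrm_fo} by structural induction on the first-order formula $\varphi$, building the automaton compositionally from the automata for the atomic relations. First I would settle the base case: for each atomic subformula $R_i(\vec{x})$ the presentation already supplies an automaton $\calA_i$ recognising (the convolution of) the tuples naming elements related by $R_i$, and equality is handled by $\calA_\approx$ recognising $\approx$. The only subtlety is bookkeeping the variable positions: a formula may reuse or permute variables and carry parameters $\vec a$, so I first replace each $\FO$-formula by one in which the free variables are a fixed tuple $\vec x$ and the parameters are substituted as fixed words $u_{\vec a}$ naming the $a_j$; this is a routine projection/cylindrification on tapes, and it means the relation we actually recognise throughout is $f^{-1}(R)$ as a $k$-ary $\omega$-regular relation over the convolution alphabet $\Sigma^k$ (or $\Sigma_\Box^k$ in the finite-word case).

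The inductive step then uses exactly the Boolean and projection closure properties of $\omega$-regular languages. For $\varphi = \psi_1 \wedge \psi_2$ and $\varphi = \neg\psi$ I would intersect and complement the corresponding $\omega$-automata; here one invokes the classical facts that $\omega$-regular (and regular) languages are effectively closed under intersection, union and complementation (complementation for B\"uchi automata being the nontrivial but standard McNaughton/Safra construction, or more cheaply phrased via Muller or parity automata as allowed by the definition in Section~\ref{sec_prelim}). For the existential quantifier $\varphi = \exists y\, \psi(\vec x, y)$ I would take the automaton for $\psi$ over the alphabet $\Sigma^{k+1}$ and project away the tape carrying $y$, obtaining a nondeterministic $\omega$-automaton over $\Sigma^k$; the accepted language is the image of $L(\psi)$ under deletion of that coordinate, which is again $\omega$-regular. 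Universal quantification reduces to $\neg\exists\neg$. Since $\approx$ is a congruence of $(L(\calA),\{L(\calA_i)\}_i)$ by the remark following Definition~\ref{def_as}, the $\approx$-saturated languages built at each stage correspond exactly to relations on $\frakA$ under $f$, so the final automaton recognises $f^{-1}(R)$ as required for item~(i).

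Items~(ii) and~(iii) then follow quickly. For decidability of the $\FO$-theory (item~(ii)), given a sentence $\varphi$ I apply the construction of~(i) to obtain an $\omega$-automaton for a $0$-ary relation, i.e.\ an automaton over the one-letter convolution alphabet whose language is either all of $L(\calA)$-style codes or empty; deciding whether $\varphi$ holds reduces to the (decidable) emptiness problem for $\omega$-automata. For closure under $\FO$-interpretations (item~(iii)), given an interpretation of a structure $\frakB$ inside an $\omega$-automatic $\frakA$ by $\FO$-formulas defining the domain and each relation, I apply~(i) to each defining formula to produce automata, and the congruence defining the interpretation is likewise $\FO$-definable, hence recognisable; assembling these yields an $\omega$-automatic presentation of $\frakB$.

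The main obstacle, and the one place where $\omega$-words genuinely differ from finite words, is the projection step for the existential quantifier combined with complementation: naive projection of a deterministic $\omega$-automaton yields a nondeterministic one, and B\"uchi automata are \emph{not} closed under complementation at the deterministic level, so one must either carry along the effective complementation procedure for nondeterministic B\"uchi automata or work throughout with a determinisable acceptance condition (parity/Muller), which is exactly why the preliminaries admit all three acceptance modes. Everything else is an exercise in the effective closure properties of $\omega$-regular languages and careful tape-index bookkeeping, so I would flag complementation as the conceptual heart of the fundamental theorem and otherwise treat the induction as routine.
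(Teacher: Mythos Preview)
The paper does not prove Theorem~\ref{thrm_fo}; it is quoted as a known result with references to \cite{Hod83,KN95,BG04}, so there is no ``paper's own proof'' to compare against. Your sketch is the standard structural-induction argument underlying those references and is essentially correct.

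One small gap worth tightening: in the negation step you say you ``complement the corresponding $\omega$-automata'', but the complement you need is relative to $L(\calA)^k$, not to $(\Sigma^\omega)^k$. That is, $f^{-1}(\neg R) = L(\calA)^k \setminus f^{-1}(R)$, so after complementing in $(\Sigma^\omega)^k$ you must intersect with $L(\calA)^k$ (the product of the domain automaton). Without this, the invariant that each intermediate language equals $f^{-1}(\text{something})$ and hence sits inside $L(\calA)^k$ breaks at the first $\neg$, and then your later remark that projection automatically restricts the witness $y$ to $L(\calA)$ no longer holds. This is exactly where the domain automaton $\calA$ from Definition~\ref{def_as} is used. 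A second cosmetic point: your description of item~(ii) as an ``automaton over the one-letter convolution alphabet'' is slightly off; for a sentence one simply checks non-emptiness of the language built for any equivalent formula with a dummy free variable (or, cleanly, treats the $0$-ary case as a Boolean), but the content --- reduce to emptiness of an $\omega$-automaton --- is right.
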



Let $\FOC$ denote the extension of first-order logic with all quantifiers 
of the form 
\begin{itemize}
\item $\exists^{(r\,\mathrm{mod}\,m)} x \, . \, \varphi$ 
      meaning that the number of $x$ satisfying $\varphi$ is finite 
      and is congruent to $r$ mod $m$; 
\item $\exists^\infty x \, . \, \varphi$ 
      meaning that there are infinitely many $x$ satisfying $\varphi$;
\item $\exists^{\leq \aleph_0} x \, . \, \varphi$ and $\exists^{>\aleph_0} x  \, . \, \varphi$
      meaning that the cardinality of the set of all $x$ satisfying $\varphi$
      is countable, or uncountable, respectively.
\end{itemize}
It has been observed that for \emph{injective} ($\omega$-)automatic 
presentations Theorem \ref{thrm_fo} can be extended from $\FO$ to $\FOC$ \cite{KRS04,KL06}.
Moreover, Kuske and Lohrey show that the cardinality of any set definable 
in $\FOC$ is either countable or equal to that of the continuum.
Our main contribution is the following generalisation of their result.

\begin{theorem} \label{thrm_foc}
The statements of Theorem \ref{thrm_fo} hold true for $\FOC$ over all
(not necessarily injective) $\omega$-automatic presentations.
\end{theorem}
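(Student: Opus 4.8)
The plan is to establish item~(i) of Theorem~\ref{thrm_fo} for $\FOC$; items~(ii) and~(iii) then follow from~(i) verbatim as in the $\FO$ case (decide a sentence by testing the automaton computed for its empty tuple, and substitute the automata from~(i) into an interpretation to presentate the interpreted structure). Since $\FOC$ is $\FO$ closed under the four new quantifiers, I would argue by induction on formula structure, the only new cases being those quantifiers, maintaining the invariant that for every $\FOC$-formula $\psi(\vec y)$ with parameters the set $f^{-1}\{\vec b : \frakA\models\psi(\vec b)\}$ is $\omega$-regular and computable. In the quantifier step $\psi(\vec y)=Q\,x\,\varphi(\vec y,x)$, the induction hypothesis gives an automaton for the $\approx$-invariant relation $R(\vec a,x)=f^{-1}\{(\vec a,a):\frakA\models\varphi\}$. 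Fixing parameters and writing $X=\{x:(\vec a,x)\in R\}$, the crucial point is that a counting quantifier over $\frakA$ sees the number of \emph{elements}, i.e.\ $|X/{\approx}|$, not the number of codes $|X|$; the goal is an automaton, uniform in $\vec a$, recognising the tuples for which $|X/{\approx}|$ satisfies the relevant cardinality or congruence condition.

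I would separate two layers. The first concerns cardinalities of plain $\omega$-regular word-sets: an $\omega$-regular $L\subseteq\Sigma^\omega$ is finite, countably infinite, or of size continuum, and each case is read off a deterministic parity or Muller automaton for $L$ --- finiteness by the standard pumping criterion on accepting runs, and the countable/continuum dichotomy by the presence of a reachable, co-accepting loop that can be closed in two distinct ways, which embeds $2^\omega$ and so witnesses the perfect-set property. When $L$ is finite, $|L|\bmod m$ is computed by a product automaton that counts accepted words up to the threshold $m$. The second, harder layer is the passage from codes to $\approx$-classes, and this is where $\eqe$ enters.

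For the class count I would run the same dichotomy one level up, on the product of the automaton for $X$ with the automaton for $\approx$: $|X/{\approx}|$ is uncountable exactly when $X$ contains a $\approx$-perfect set, a Cantor-set of pairwise $\approx$-inequivalent codes, which should be witnessed by a reachable idempotent loop admitting two fillings $w_0\neq w_1$ that keep the run inside $X$ while forcing $\approx$-inequivalence of any two distinct infinite choice-sequences. Here $\eqe$ is used twice: first, because every $\eqe$-class is countably infinite, a set of codes is countable iff it meets only countably many $\eqe$-classes, which localises the whole analysis to the eventual behaviour of codes; second, the expansion of $\frakA$ by $\eqe$ is again $\omega$-automatic, so every auxiliary relation built during the analysis stays $\omega$-regular and the resulting condition on $\vec a$ is $\FO$-definable over this expansion, hence $\omega$-regular by Theorem~\ref{thrm_fo}. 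The finite-index and modular cases are analogous: finiteness of $|X/{\approx}|$ is the failure of an infinite pumping of pairwise $\approx$-inequivalent codes, and when the index is finite it is bounded by the number of reachable ``type'' profiles of the product automaton, so $|X/{\approx}|\bmod m$ can be extracted by a counting automaton.

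The main obstacle is the algebraic core of the second layer: proving the ``countable or continuum'' dichotomy for the \emph{index} of an $\omega$-regular equivalence relation restricted to an $\omega$-regular set, together with an \emph{effective and parameter-uniform} test for each case. I expect this to require the $\omega$-semigroup (Wilke-algebra) structure: a Ramseyan factorisation of each code as $u\,e^\omega$ with $e$ an idempotent loop, followed by an analysis of how $\approx$ acts on such loops --- whether a loop admits two $\approx$-separating continuations, yielding a perfect set of classes, or whether all sufficiently long continuations are forced $\approx$-equivalent, yielding countably many. Compiling this case distinction into a single automaton that is correct uniformly as $\vec a$ varies, while simultaneously delivering the exact finite counts demanded by the modular quantifier, is the step I expect to be hardest and to carry the real content of the theorem.
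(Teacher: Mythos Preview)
Your high-level architecture matches the paper's: prove item~(i) by structural induction, handle the new quantifiers by passing to the $\omega$-automatic expansion by $\eqe$, and reduce each counting quantifier to an $\FO$ formula there. You also correctly locate the real work in the countable/continuum dichotomy for the \emph{index} $|X/{\approx}|$ and correctly anticipate that Ramseyan factorisations and $\omega$-semigroups are the right tools.

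The genuine gap is precisely the step you flag as ``hardest''. Your proposed witness for uncountability --- a reachable idempotent loop in a product automaton admitting two fillings that force $\approx$-inequivalence of all shuffles --- is a property of the \emph{automaton}, and you do not explain how it becomes an $\FO$ condition on the parameters $\vec a$. The paper closes this gap with a concrete formula: there is a constant $C$ (the square of the relevant $\omega$-semigroup's size, hence depending only on the presentation, not on $\vec a$) such that $|X/{\approx}|$ is countable iff there exist $x_1,\ldots,x_C\in X$ with every $x\in X$ $\approx$-equivalent to some $y\eqe x_i$. This is manifestly $\FO$ over $(\frakA,\approx,\eqe)$ and uniform in $\vec a$. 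The hard direction --- that failure of this condition yields continuum many $\approx$-classes --- is where your ``two fillings'' intuition must be made precise, and it is more delicate than you suggest: from $C{+}1$ pairwise $\eqe$-separated, $\approx$-separated words one uses Ramsey plus pigeonhole to find two with identical $\varphi$- and $\approx$-semigroup profiles, then a careful coarsening to arrange absorption properties among three idempotents $t,t^\uparrow,t^\downarrow$, and only then does shuffling produce a Cantor set of $\approx$-inequivalent words. The verification that the shuffles are pairwise $\approx$-inequivalent crucially uses \emph{transitivity} of $\approx$ (twice), a point your sketch does not touch.

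Your treatment of $\exists^{<\infty}$ and $\exists^{(r\,\mathrm{mod}\,m)}$ is also too thin. ``Bounded by the number of reachable type profiles'' and ``extracted by a counting automaton'' do not yet give $\FO$ formulas over the expansion. The paper adds two further $\omega$-regular relations $\pi,\lambda$ (comparing last-difference positions and lexicographic order of prefixes) which, combined with the $C$ witnesses, let one express finiteness and, in the finite case, single out a unique representative per $\approx$-class; only then can the known modular-counting constructions for injective presentations be invoked.
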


It is easily seen that finite-word automatic presentations can be assumed to be
injective. 
This is achieved by restricting the domain of the presentation to a regular set
of representatives of the equivalence involved.  This can be done effectively,
e.g. by selecting the length-lexicographically least word of every class.

This brings us to the question which $\omega$-automatic structures allow 
an injective $\omega$-automatic presentation.
In \cite{KL06} Kuske and Lohrey have pointed out that not every
$\omega$-regular equivalence has an $\omega$-regular set of representatives.
In particular, the following Lemma shows that the \emph{equal-ends relation
$\eqe$} of Example \ref{ex_1} is a counterexample.

\begin{lemma}[{\cite[Lemma 2.4]{KL06}}] \label{lemma_KL}
Let $\calA$ be a B\"uchi automaton
with $n$ states over $\Sigma \times \Gamma$ and let $u \in \Sigma^\omega$ be given.
Consider the set $V = \{ v \in \Gamma^\omega \mid u \otimes v \in L(\calA) \}$.  
Then $V$ is uncountable if and only if $|V / \eqe | > n$, otherwise it is finite
or countable.
\end{lemma}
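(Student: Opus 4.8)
The plan is to split the biconditional into the two implications (A) $|V/{\eqe}| > n \Rightarrow V$ uncountable and (B) $|V/{\eqe}| \le n \Rightarrow V$ countable, and to dispose of (B) first, since it is essentially free of the automaton. Indeed every $\eqe$-class is countable: a class $\{\, w \mid w \eqe v \,\}$ is the union, over finite sets $E \subseteq \Nat$, of the finitely many words agreeing with $v$ off $E$, a countable union of finite sets. Hence if $V$ meets at most $n$ classes, it is contained in a union of $n$ countable sets and is itself finite or countable. All the content therefore lies in (A), and this is precisely where the bound $n$ on the number of states of $\calA$ enters.

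For (A) I would start from $n+1$ words $v_0,\dots,v_n \in V$ lying in pairwise distinct $\eqe$-classes, and fix for each an accepting run $\rho_j$ of $\calA$ on $u \otimes v_j$. Writing $Q$ for the state set (so $|Q| = n$) and $F$ for the accepting states, the first step is a double pigeonhole. At every position $i$ the $n+1$ states $\rho_0(i),\dots,\rho_n(i)$ cannot all be distinct, so some pair collides there; since there are only finitely many pairs but infinitely many positions, one pair, say $x,y \in \{v_0,\dots,v_n\}$, satisfies $\rho_x(i) = \rho_y(i)$ for infinitely many $i$, and refining once more over the finite set $Q$ there is a single state $q$ with $\rho_x(i) = \rho_y(i) = q$ for infinitely many $i$. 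Crucially $x \not\eqe y$, so $x$ and $y$ differ at infinitely many positions.

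The second step is a splicing construction turning this one recurring collision state $q$ into a Cantor set of accepted words. I would thin the common positions to a sequence $i_0 < i_1 < \cdots$ inside $\{\, i \mid \rho_x(i) = \rho_y(i) = q \,\}$ so that each block $[i_k,i_{k+1})$ simultaneously (i) contains a position where $x$ and $y$ differ, and (ii) contains an $F$-visit of $\rho_x$ and an $F$-visit of $\rho_y$; this is possible because all three events occur infinitely often. Since both runs sit in $q$ at every $i_k$, reading either $x[i_k,i_{k+1})$ or $y[i_k,i_{k+1})$ from $q$ returns to $q$ at $i_{k+1}$. Thus for any choice sequence $\alpha \in \{x,y\}^\omega$, the word formed by using $\alpha_k$'s segment on block $k$ (with prefix $x[0,i_0)$) carries a valid run passing through $q$ at each $i_k$; clause (i) makes distinct $\alpha$ produce distinct words, so the map $\alpha \mapsto v_\alpha$ is injective and $|V| \ge 2^{\aleph_0}$.

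The step I expect to be the main obstacle is maintaining the B\"uchi acceptance condition along the spliced runs: an arbitrary interleaving of two accepting runs need not visit $F$ infinitely often. This is exactly what clause (ii) of the thinning is designed to repair---by forcing \emph{both} candidate segments of every block to pass through $F$, every block of every spliced run contributes an accepting visit, so acceptance is preserved regardless of the choices made. Once this is in place the construction delivers continuum-many members of $V$, establishing (A) and, together with (B), the full dichotomy of the lemma.
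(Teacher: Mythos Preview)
The paper does not supply its own proof of this lemma; it is quoted verbatim from \cite[Lemma~2.4]{KL06} and used as a black box (to derive Corollary~\ref{coroll_countable} and to motivate the subsequent development). So there is no in-paper argument to compare against.

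Your argument is correct. Direction (B) is immediate from the countability of each $\eqe$-class, as you say. For direction (A), the double pigeonhole (first over pairs $\{j,j'\}\subseteq\{0,\dots,n\}$, then over states) to extract two runs that coincide in a fixed state $q$ infinitely often is sound, and the splicing construction is exactly right. The point you flag as the main obstacle---preserving the B\"uchi condition under arbitrary interleaving---is handled cleanly by your thinning clause (ii): since all four events ($x\neq y$, $\rho_x\in F$, $\rho_y\in F$, and the next $q$-synchronisation) occur infinitely often, one can greedily choose $i_{k+1}$ large enough past $i_k$ to see all of them, and then every block of every spliced run contributes an $F$-visit. Injectivity of $\alpha\mapsto v_\alpha$ follows from clause (i), giving $2^{\aleph_0}$ elements of $V$. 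This is the standard proof; it matches in spirit the Ramsey-and-shuffle technique the paper uses later in Proposition~\ref{prop_count}, only here ordinary pigeonhole suffices because the relevant colouring is unary in the position rather than binary.
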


The lemma implies that an $\omega$-regular set is countable if and only if it
meets only finitely many equal-ends-classes.  In this case each of its members
is ultimately periodic with one of finitely many periods. 


\begin{corollary} \label{coroll_countable}
An $\omega$-regular set is countable iff it
can be written as a finite union of sets of the form $U_j \cdot (w_j)^\omega$ 
with each $U_j$ a regular set of finite words and each $w_j$ a finite word.
\end{corollary}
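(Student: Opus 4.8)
The plan is to prove the two implications separately, with essentially all of the work residing in the ``only if'' direction, for which I rely on the structural consequence of Lemma~\ref{lemma_KL} recorded just above: a countable $\omega$-regular set $L$ meets only finitely many $\eqe$-classes and therefore consists of ultimately periodic words whose repeating blocks may be taken from a fixed finite list $w_1,\dots,w_k$ of finite words. Closing this list under cyclic rotation (still finitely many words), I may assume that every $u \in L$ is literally of the form $x\,(w_j)^\omega$ for some finite prefix $x$ and some index $j \leq k$, the phase of the period being absorbed into $x$.

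First I would record the decomposition. Setting $U_j = \{\, x \in \Sigma^\ast \mid x\,(w_j)^\omega \in L \,\}$ for each $j \leq k$, the inclusion $\bigcup_{j\leq k} U_j \cdot (w_j)^\omega \subseteq L$ is immediate from the definition of $U_j$, while the reverse inclusion is exactly the assertion that every member of $L$ has the shape $x\,(w_j)^\omega$, which the previous paragraph supplies. Hence $L = \bigcup_{j \leq k} U_j \cdot (w_j)^\omega$, and it remains only to see that each $U_j$ is a regular set of finite words.

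The key step, and the only place a genuine automaton argument is needed, is this regularity of $U_j$, which is a left quotient by a fixed ultimately periodic suffix. Fix a B\"uchi automaton $\calA = (Q,\Sigma,\delta,q_0,F)$ for $L$ and let $S_j \subseteq Q$ be the set of states $q$ from which the purely periodic word $(w_j)^\omega$ is accepted, i.e. from which some run reads $w_j$ repeatedly while visiting $F$ infinitely often. Because $(w_j)^\omega$ is a single fixed periodic word, $S_j$ is computable by a finite analysis of the cycles of $\calA$ reachable under repeatedly reading $w_j$. Since a run on $x\,(w_j)^\omega$ factors as a run on $x$ from $q_0$ to some state $q$ followed by an accepting run on $(w_j)^\omega$ from $q$, and since any path reaching $q$ can be glued to any path starting at $q$, we obtain $x \in U_j$ iff $q_0 \xrightarrow{x} q$ for some $q \in S_j$. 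Thus $U_j$ is the language of the NFA obtained from $\calA$ by declaring $S_j$ to be its set of final states, and is therefore regular.

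For the converse there is nothing to compute: each $U_j \cdot (w_j)^\omega$ is the image of the countable set $U_j \subseteq \Sigma^\ast$ under $x \mapsto x\,(w_j)^\omega$, hence countable, and a finite union of countable sets is countable. I expect the main obstacle to be the regularity step of the forward direction: one must argue that $S_j$ depends only on $w_j$ and not on $x$ --- this is precisely what decouples $L$ into the finitely many regular pieces $U_j$ --- and one must treat the nondeterminism of $\calA$ correctly when gluing the prefix run on $x$ to an accepting run on $(w_j)^\omega$. A secondary care point is the closure of the list of periods under cyclic rotation, so that the phase of each ultimately periodic member can be absorbed into its finite prefix.
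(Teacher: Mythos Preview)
Your proposal is correct and follows exactly the route the paper intends: the paper does not give a proof of this corollary at all, but simply records it as an immediate consequence of the sentence preceding it (that a countable $\omega$-regular set meets only finitely many $\eqe$-classes and hence consists of ultimately periodic words with finitely many periods). Your write-up fleshes out precisely these implicit steps --- the closure of the period list under cyclic rotation and, in particular, the regularity of each $U_j$ via the left-quotient/B\"uchi argument --- which the paper leaves to the reader.
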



A related question raised by Blumensath \cite{Blu99} is whether every \emph{countable} 
$\omega$-automatic structure is also automatic. It is easy to see
that every \emph{injective} $\omega$-automatic presentation of a countable
structure can be ``packed'' into an automatic presentation. 

\begin{proposition}\label{prop_cnt_inj} (\cite[Theorem 5.32]{Blu99})
Let $\frakd$ be an injective $\omega$-automatic presentation of 
a countable structure $\calA$.
Then, an (injective) automatic presentation $\frakd'$ of $\calA$ 
can be effectively constructed.
\end{proposition}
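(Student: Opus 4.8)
The plan is to exploit that the domain of an injective presentation of a countable structure is a \emph{countable} $\omega$-regular set, and hence, by Corollary~\ref{coroll_countable}, is a finite union $L(\calA) = \bigcup_{j=1}^{k} U_j\cdot (w_j)^\omega$ with the $U_j$ regular sets of finite words and the $w_j$ finite words. In particular every code $\alpha\in L(\calA)$ is ultimately periodic and uses one of the finitely many periods $w_1,\dots,w_k$; after replacing each $w_j$ by its $M/|w_j|$-th power, where $M=\mathrm{lcm}(|w_1|,\dots,|w_k|)$, I may assume all periods have the common length $M$. Each such $\alpha$ can therefore be written as $u\,w_j^\omega$ for a finite prefix $u$ and an index $j$, and I will encode it by the finite word $u\,\sharp_j$ over the enlarged alphabet $\Sigma\cup\{\sharp_1,\dots,\sharp_k\}$, where the marker $\sharp_j$ records which period is appended. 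The collection $L'$ of all such codes of elements of $L(\calA)$, which will serve as the domain of $\frakd'$, is regular: for each fixed $j$ the set $\{u\mid u\,w_j^\omega\in L(\calA)\}$ is a regular set of finite words, and $j$ ranges over a fixed finite set. The decoding map $f'$ sending $u\,\sharp_j$ to $f(u\,w_j^\omega)$ is then a well-defined surjection onto $\dom(\frakA)$.

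Next I would transfer the relations. Fix a relation $R_i$ and its recognising $\omega$-automaton $\calA_i$. Given codes $u_1\sharp_{j_1},\dots,u_a\sharp_{j_a}$ of a tuple $\alpha_1,\dots,\alpha_a\in L(\calA)$, the convolution $\otimes\vec\alpha$ is again ultimately periodic: for $n\ge N:=\max_t|u_t|$ each component equals $w_{j_t}[(n-|u_t|)\bmod M]$, so $\otimes\vec\alpha = P\,Q^\omega$ with transient $P$ of length $N$ and period block $Q$ of length $M$, where $Q$ is determined by the finitely many data $(w_{j_t},\,|u_t|\bmod M)_t$. I will build a finite-word automaton over the code alphabet that reads the convolution of the finite codes and simulates $\calA_i$ on $\otimes\vec\alpha$: at each position it reconstructs the current letter $\otimes\vec\alpha[n]$ from the codes by reading a genuine $\Sigma$-letter while still inside the prefix of component $t$, and, once the marker $\sharp_{j_t}$ has been passed, by maintaining a counter modulo $M$ that recovers $w_{j_t}[(n-|u_t|)\bmod M]$ from the padding symbol $\Box$. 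Having simulated $\calA_i$ along the transient $P$ and thereby reached (nondeterministically) a state $q$ at position $N$, together with full knowledge of the eventual block $Q$, the automaton accepts exactly when $\calA_i$ has an accepting run on $Q^\omega$ from $q$. The latter is a fixed boolean depending only on $(q,Q)$, computable by the standard loop analysis of $\omega$-automata (a reachable cycle on some power of $Q$ meeting the acceptance condition), and $Q$ ranges over a finite set; hence it is a finite table lookup. This yields a finite-word automaton recognising $f'^{-1}(R_i)$, and the analogous argument applied to equality of decoded words shows that the congruence $\approx'=\{(c,c')\mid f'(c)=f'(c')\}$ is regular as well.

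Together these constructions give an automatic presentation $\frakd'$ of $\frakA$. To obtain injectivity I would finally invoke the easy finite-word fact already noted above: one may restrict $L'$ to the regular set of length-lexicographically least codes of each $\approx'$-class, which makes $f'$ injective and lets $\calA_{\approx'}$ be dropped. All steps are effective, so $\frakd'$ is constructed from $\frakd$.

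The step I expect to be the main obstacle is the relation transfer: one must faithfully reconstruct the ultimately periodic $\omega$-word $\otimes\vec\alpha$ from finite codes of differing lengths -- coping with the $\Box$-padding and the distinct phases $|u_t|\bmod M$ of the components -- and then correctly convert the B\"uchi (or Muller) acceptance of the $\omega$-tail $Q^\omega$ into an acceptance condition of a finite-word automaton via loop analysis. The remaining steps (the decomposition from Corollary~\ref{coroll_countable}, regularity of $L'$, and the length-lexicographic selection for injectivity) are comparatively routine.
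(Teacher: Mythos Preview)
The paper does not supply its own proof of this proposition: it merely states the result, attributes it to Blumensath's thesis \cite[Theorem~5.32]{Blu99}, and remarks beforehand that ``it is easy to see'' that an injective $\omega$-automatic presentation of a countable structure can be ``packed'' into an automatic one. So there is nothing in the paper to compare your argument against beyond this one sentence of intent.

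That said, your proposal is correct and is precisely the intended ``packing'' construction. You use Corollary~\ref{coroll_countable} exactly as the paper sets it up: the injectivity plus countability force the domain $L(\calA)$ to be countable, hence a finite union $\bigcup_j U_j\cdot(w_j)^\omega$; you then replace each ultimately periodic code by a finite prefix tagged with its period, and you transfer the relations by simulating the $\omega$-automaton on the reconstructed transient and deciding acceptance on the periodic tail via loop analysis. The points you flag as potential obstacles --- handling components whose markers $\sharp_{j_t}$ appear at different positions by keeping phase counters modulo the common period length $M$, and reducing B\"uchi/Muller acceptance on $Q^\omega$ to a finite table in $(q,Q)$ --- are exactly the places where care is needed, and your description handles them correctly. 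The final length-lexicographic pruning to recover injectivity is the standard finite-word trick already recalled in the paper. There is no gap.
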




In our proof of Theorem \ref{thrm_foc} we identify a property of finite
semigroups that recognise transitive relations (Lemma~\ref{lemma_ab} item (3))
that allows us to drop the assumption of injectivity in the previous statement.
We are thus able to answer the question of Blumensath.

\begin{corollary}\label{coroll_cnt}
A countable structure is $\omega$-automatic if and only if it is automatic.
Transforming a presentation of one type into the other can be done effectively.
\end{corollary}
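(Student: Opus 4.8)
The plan is to prove the two directions separately, with essentially all of the content residing in the harder one. For the easy direction, if $\frakA$ is automatic then by definition it has a presentation whose language consists only of padded words $w^\Box$ over $\Sigma_\Box$; such a presentation is already an $\omega$-automatic presentation, so nothing has to be done and the transformation is the identity. The substance is the converse: given a (not necessarily injective) $\omega$-automatic presentation $\frakd = (\calA, \calA_\approx, \{\calA_i\}_i)$ with naming function $f$ of a countable structure $\frakA$, produce an automatic presentation effectively.

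The overall strategy is to first make the presentation injective and then invoke Proposition~\ref{prop_cnt_inj}. Injectivity amounts to selecting an $\omega$-regular transversal $R \subseteq L(\calA)$ meeting every $\approx$-class exactly once: restricting $\calA$ and each $\calA_i$ to $R$ and keeping $f|_R$ then yields an injective $\omega$-automatic presentation of the same structure, to which Proposition~\ref{prop_cnt_inj} applies. Since $\frakA$ is countable, $\approx$ is an $\omega$-regular equivalence of countable index, so the whole problem reduces to the following: every $\omega$-regular equivalence relation of countable index admits an $\omega$-regular set of representatives, computable from an automaton for the relation.

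The construction of $R$ is the main obstacle. Two constraints must be met simultaneously: $R$ must be a genuine transversal, and it must be $\omega$-regular. Corollary~\ref{coroll_countable} already tells us what $R$ can look like: being a countable $\omega$-regular set, it must be a finite union of sets $U_j \cdot (w_j)^\omega$, so every chosen representative is ultimately periodic and only finitely many periods $w_j$ may occur across all classes. This is consistent with Lemma~\ref{lemma_KL}, and it is exactly here that the algebraic input is needed. I would pass to the finite $\omega$-semigroup recognising $\approx$, whose idempotents give finitely many candidate periods and whose linked pairs $(s,e)$ name ultimately periodic words $uv^\omega$; each nonempty $\omega$-regular $\approx$-class contains such a word. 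Using the transitivity of $\approx$ together with the semigroup property isolated in Lemma~\ref{lemma_ab}(3), I would choose one linked pair per class coherently, so that membership of the named word in the transversal becomes recognisable. Where several candidates survive inside a single class, the counting power of Theorem~\ref{thrm_foc} is used to single one out by an $\FOC$-formula, guaranteeing both uniqueness and, via the fundamental theorem for $\FOC$, $\omega$-regularity of the resulting $R$.

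With $R$ in hand the deduction is routine: restrict the presentation to $R$ to obtain an injective $\omega$-automatic presentation of $\frakA$, and apply Proposition~\ref{prop_cnt_inj}. Every step---computing the semigroup and its idempotents, forming the $\FOC$-formula defining $R$ by Theorem~\ref{thrm_foc}, restricting the automata, and the packing of Proposition~\ref{prop_cnt_inj}---is effective, so a presentation of either type can be transformed into the other. The crux, and the place where injectivity is actually dropped relative to Proposition~\ref{prop_cnt_inj}, is the coherent semigroup-based choice of periods underlying the transversal $R$.
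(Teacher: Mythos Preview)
Your high-level plan---build an $\omega$-regular transversal of $\approx$, restrict the presentation to it, then apply Proposition~\ref{prop_cnt_inj}---is exactly the paper's route. The divergence, and the gap, is in how you propose to construct the transversal.

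First, a $\approx$-class is the section of an $\omega$-regular binary relation at a (possibly non-ultimately-periodic) parameter and is \emph{not} in general $\omega$-regular, so the phrase ``each nonempty $\omega$-regular $\approx$-class contains such a word'' does not get off the ground. Linked pairs of the $\omega$-semigroup recognising $\approx$ classify ultimately periodic \emph{pairs} $u\otimes v$ in the relation; they do not by themselves show that every class contains a representative with one of finitely many periods. That uniform-period conclusion is precisely the content of Proposition~\ref{prop_count}: applied with $\varphi(x)\equiv(x\in L(\calA))$ it yields $C$ words $x_1,\ldots,x_C$ such that every $\approx$-class meets some $[x_i]_{\eqe}$, and since the first-order formula $\psi(x_1,\ldots,x_C)$ asserting this defines a nonempty $\omega$-regular set of $C$-tuples one can take the $x_i=v_i(u_i)^\omega$ ultimately periodic. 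The paper invokes its semigroup work here only through Proposition~\ref{prop_count} as a black box; there is no separate appeal to Lemma~\ref{lemma_ab}(3) and no ``coherent choice of linked pairs per class'', which as stated is not a well-defined procedure.

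Second, once $B:=\bigcup_i \Sigma^*(u_i)^\omega$ is in hand, the paper does not use $\FOC$ to pick representatives (counting quantifiers count, they do not select; invoking Theorem~\ref{thrm_foc} here is also mildly circular, since its proof for $\exists^{(r\,\mathrm{mod}\,m)}$ already performs exactly this representative selection). Instead the paper writes down an explicit $\omega$-regular well-founded linear order $\prec$ on $B$---compare the minimal period length-lexicographically, then the minimal transient---and takes the $\prec$-minimal element of each $\approx$-class. The resulting set of representatives is first-order definable from $\approx$ and $\prec$, hence $\omega$-regular, and all steps are effective. That is the missing concrete mechanism your sketch should be replaced with.
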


\vskip-0.3cm
\subsection{$\omega$-Semigroups}

The fundamental correspondence between recognisability by
finite automata and by finite semigroups has been extended to 
$\omega$-regular sets. 
This is based on the notion of \emph{$\omega$-semigroups}.
Rudimentary facts on $\omega$-semigroups are well presented in \cite{PP95}.
We only mention what is most necessary.

An $\omega$-semigroup $S = (S_f,S_\omega,\cdot,\ast, \pi)$ 
is a two-sorted algebra, where $(S_f,\cdot)$ is a semigroup, 
$\ast : S_f \times S_\omega \mapsto S_\omega$ is the \emph{mixed product}
satisfying for every $s,t \in S_f$ and every $\alpha \in S_\omega$ the equality
\[ 
  s \cdot (t \ast \alpha) = (s \cdot t) \ast \alpha 
\]
and where $\pi: S_f^\omega \mapsto S_\omega$ is the \emph{infinite product}
satisfying 
\[
  s_0 \cdot \pi(s_1, s_2, \ldots) = \pi(s_0, s_1, s_2, \ldots)
\]
as well as the associativity rule
\[ 
  \pi(s_0, s_1, s_2, \ldots) =  
    \pi(s_0 s_1 \cdots s_{k_1}, s_{k_1 +1} s_{k_1 +2} \cdots s_{k_2}, \ldots)
\]
for every sequence $(s_i)_{i\geq 0}$ of elements of $S_f$ and every 
strictly increasing sequence $(k_i)_{i\geq 0}$ of indices.
For $s \in S_f$ we denote $s^\omega = \pi(s, s, \ldots)$.

Morphisms of $\omega$-semigroups are defined to preserve all three products
as expected. There is a natural way to extend finite semigroups and their 
morphisms to $\omega$-semigroups. As in semigroup theory, idempotents play 
a central role in this extension.
An \emph{idempotent} is a semigroup element $e \in S$ satisfying $ee = e$. 
For every element $s$ in a finite semigroup the sub-semigroup generated by $s$ 
contains a unique idempotent $s^k$. The least $k>0$ such that $s^k$ is idempotent 
for every $s \in S_f$ is called the \emph{exponent} of the semigroup $S_f$ 
and is denoted by $\pi$. Another useful notion is absorption of semigroup elements: 
say that {\em $s$ absorbs $t$ (on the right)} if $st = s$. 


There is also a natural extension of the free semigroup $\Sigma^+$ to 
the $\omega$-semigroup $(\Sigma^+,\Sigma^\omega)$ with $\ast$ and $\pi$ 
determined by concatenation. 
An $\omega$-semigroup $S=(S_f, S_\omega)$ \emph{recognises} a language 
$L \subseteq \Sigma^\omega$ via a morphism 
$\phi : (\Sigma^+,\Sigma^\omega) \rightarrow (S_f, S_\omega)$ 
if $\phi^{-1}(\phi(L)) = L$.
This notion of recognisability coincides, as for finite words,
with that by non-deterministic B\"uchi automata. 
In \cite{PP95} constructions from B\"uchi automata to $\omega$-semigroups
and back are also presented.

\begin{theorem}[\cite{PP95}] \label{thrm_omega_semigroups} ~\\
A language $L \subseteq \Sigma^\omega$ is $\omega$-regular iff
it is recognised by a finite $\omega$-semigroup.
\end{theorem}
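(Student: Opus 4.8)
The plan is to prove both implications separately, with the infinite Ramsey theorem as the common engine. For the direction from recognisability to $\omega$-regularity, suppose a morphism $\phi\colon(\Sigma^+,\Sigma^\omega)\to(S_f,S_\omega)$ recognises $L=\phi^{-1}(P)$ for some $P\subseteq S_\omega$. First I would note that each language of finite words $\phi^{-1}(s)$, $s\in S_f$, is regular, since the restriction of $\phi$ to $\Sigma^+$ is a morphism into the \emph{finite} semigroup $S_f$. The heart of the argument is the decomposition
\[
  \Sigma^\omega \;=\; \bigcup_{(s,e)}\ \phi^{-1}(s)\cdot\bigl(\phi^{-1}(e)\bigr)^\omega,
\]
the union ranging over all \emph{linked pairs}, i.e.\ pairs $(s,e)\in S_f\times S_f$ with $e$ idempotent and $se=s$.

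Every $w\in\Sigma^\omega$ lies in at least one such piece by a Ramsey argument: colouring each pair $i<j$ by $\phi(w[i,j))\in S_f$ and extracting an infinite monochromatic set yields a factorisation $w=u\,v_1 v_2\cdots$ with all $\phi(v_k)$ equal to a common idempotent $e$ and $\phi(u)\,e=\phi(u)$. By the mixed-product and infinite-associativity axioms of $\omega$-semigroups, the value $\pi$ assigns to $w$ equals $\phi(u)\ast e^\omega$, so it depends only on the linked pair. Hence $L=\phi^{-1}(P)$ is the finite union of those pieces $\phi^{-1}(s)\cdot(\phi^{-1}(e))^\omega$ for which $s\ast e^\omega\in P$. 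Since each $\phi^{-1}(s)$ is regular and the class of $\omega$-regular languages is closed under $U\mapsto U^\omega$, under concatenation, and under finite union, $L$ is $\omega$-regular.

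For the converse I would build the transition $\omega$-semigroup of a B\"uchi automaton $\calA=(Q,\Sigma,\delta,I,F)$ recognising $L$. Let $S_f$ be the transition semigroup whose element for a word $u$ records, for every ordered pair of states $(p,q)$, whether $u$ admits no run from $p$ to $q$, a run from $p$ to $q$, or a run from $p$ to $q$ that visits $F$; concatenation of words corresponds to the evident boolean-matrix-like product, and $S_f$ is finite. Let $\phi$ send $u$ to this element. I would then define $\pi$ through the same Ramsey factorisation: given $w=u\,v_1 v_2\cdots$ with $\phi(v_k)=e$ idempotent and $\phi(u)\,e=\phi(u)$, declare $w$ accepted precisely when there is an initial state from which $u$ reaches some state $q$ such that $e$ admits a run $q\to q$ passing through $F$. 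One checks this coincides with visiting $F$ infinitely often, and that it is determined by the linked pair $(\phi(u),e)$. Taking $S_\omega$ to be the linked-pair classes and $P$ the accepting ones makes $\phi$ an $\omega$-semigroup morphism recognising $L$.

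The main obstacle is twofold: showing that $\pi$ is well defined, i.e.\ independent of the chosen Ramsey factorisation, and that it satisfies the infinite-associativity and mixed-product laws. Both reduce to the fact that any two idempotent factorisations of a word can be refined to a common one, after which the semigroup axioms force agreement on the value $\phi(u)\ast e^\omega$. This is precisely where the infinite-associativity axiom is used decisively, and it is the only genuinely delicate point: once well-definedness of infinite products is secured, the closure properties of $\omega$-regular languages invoked above are routine.
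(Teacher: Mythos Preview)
The paper does not give its own proof of this theorem: it is quoted from Perrin and Pin \cite{PP95} as background, with the remark that ``constructions from B\"uchi automata to $\omega$-semigroups and back are also presented'' there. So there is no in-paper proof to compare against.

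Your sketch is essentially the standard argument from \cite{PP95}: the Ramsey/linked-pair decomposition $\Sigma^\omega=\bigcup_{(s,e)}\phi^{-1}(s)\cdot(\phi^{-1}(e))^\omega$ for the direction from recognisability to $\omega$-regularity, and the transition $\omega$-semigroup of a B\"uchi automaton for the converse. One small looseness: declaring $S_\omega$ to be ``the linked-pair classes'' is not quite how the construction is usually phrased. The cleaner route is via Wilke algebras---finite semigroups equipped with an $\omega$-power $s\mapsto s^\omega$ satisfying the expected identities---together with the theorem that every Wilke algebra extends uniquely to a finite $\omega$-semigroup; this packages exactly the well-definedness of $\pi$ that you correctly flag as the only delicate point, and it is proved precisely by the common-refinement argument you outline.
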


We note that this correspondence allows one to engage 
in an algebraic study of varieties of $\omega$-regular languages,
and also has the advantage of hiding complications of cutting apart and
stitching together runs of B\"uchi automata as we shall do.
This is precisely the reason that we use this algebraic framework.
Most remarkably, one does not need to understand the exact relationship between
automata and $\omega$-semigroups and the technical details of the constructions
behind Theorem \ref{thrm_omega_semigroups} to comprehend our proof.
An alternative approach, though likely less advantageous, would be to use
the composition method, which is closer in spirit to $\omega$-semigroups
than to automata. \footnote{
Define $T_f$ resp. $T_\omega$ as the sets of bounded (in terms of quantifier rank) 
theories of finite, respectively, of $\omega$-words. The composition theorem 
ensures that $\cdot$, $\ast$, $\pi$ can naturally be defined on bounded theories.}


\vskip-0.3cm
\section{Cardinality and modulo counting quantifiers}

This section is devoted to establishing the key to Theorem \ref{thrm_foc} announced earlier.


We characterise when there exist countably many words $x$ satisfying a given
formula {\em with parameters} $\varphi(x,\vec{z})$ in some $\omega$-automatic
structure $\frakA$.  The characterisation is first-order expressible in an
$\omega$-automatic extension of $\frakA$ by the equal-ends relation $\eqe$. 


So, fix an $\omega$-automatic presentation 
of some $\frakA$ with congruence $\approx$, and a first-order formula
$\varphi(x,\vec{z})$ in the language of $\frakA$ with $x$ and $\vec{z}$ free
variables.

\begin{proposition} \label{prop_count}
There is a constant $C$, computable from the presentation $\frakd$, so that
for all tuples $\vec{z}$ of infinite words the following are equivalent:
\begin{enumerate}

\item $\varphi(-,\vec{z})$ is satisfiable and $\approx$ restricted to the 
      domain $\varphi(-,\vec{z})$ has countably many equivalence classes.

\item there exist $C$-many words $x_1, \cdots, x_C$ each satisfying
      $\varphi(-,\vec{z})$, so that every $x$ satisfying
      $\varphi(-,\vec{z})$ is $\approx$-equivalent to some $y \eqe x_i$. 
      Formally, the structure $(\frakA, \approx, \eqe)$ models the sentence below. 
\[
  \forall \vec{z} \left(
    \exists^{\leq \aleph_0} w \, . \, \varphi(w,\vec{z})  
  \longleftrightarrow  
    \exists x_1 \ldots x_C \left( 
      \bigwedge_i \varphi(x_i,\vec{z}) \land 
      \forall x \varphi(x,\vec{z}) \limp 
        \exists y (x \approx y \land \bigvee_i y \eqe x_i) 
    \right)
  \right)
\]
\end{enumerate}
\end{proposition}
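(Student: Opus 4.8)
The plan is to carry out the whole argument at the level of $\omega$-regular relations over words, where the $\omega$-semigroup machinery of Theorem~\ref{thrm_omega_semigroups} applies. First I would use Theorem~\ref{thrm_fo}(i) to pull $\varphi$ back along the naming function to an $\omega$-regular relation $\Phi(x,\vec z)$, and then fix a single finite $\omega$-semigroup $S$ with morphism $\phi$ recognising simultaneously the convolutions of $\Phi$, of $\approx$, and of $\eqe$. For a fixed parameter $\vec z$ I abbreviate $V_{\vec z}=\{x : \Phi(x,\vec z)\}$; this is the $\vec z$-fibre of one fixed automaton, so Lemma~\ref{lemma_KL} applies to it with a bound depending only on $\frakd$. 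The constant $C$ will be read off from the number of $\phi$-types realised in $S$, hence is computable from $\frakd$. In these terms condition~(1) asks that $V_{\vec z}\neq\emptyset$ and that $V_{\vec z}/\approx$ be countable, while condition~(2) asks that $V_{\vec z}\subseteq\bigcup_{i=1}^{C}\{x : \exists y\,(x\approx y \land y\eqe x_i)\}$ for some $x_1,\dots,x_C\in V_{\vec z}$, i.e. that $V_{\vec z}$ be covered, modulo $\approx$, by $C$ equal-ends classes of members of $V_{\vec z}$.

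For $(2)\Rightarrow(1)$ I would first observe that every equal-ends class is countable, since a word $v\eqe x_i$ is obtained from $x_i$ by altering finitely many letters and there are only countably many such alterations. Hence the $\approx$-saturation of each class $[x_i]_{\eqe}$ meets only countably many $\approx$-classes, a union of $C$ of them still meets countably many, and since this union covers $V_{\vec z}$ the quotient $V_{\vec z}/\approx$ is countable; it is nonempty because $x_1$ witnesses satisfiability.

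The implication $(1)\Rightarrow(2)$ is the heart, and I would argue it contrapositively: assuming that no $C$ equal-ends classes cover $V_{\vec z}$ modulo $\approx$, I construct uncountably many pairwise $\approx$-inequivalent members of $V_{\vec z}$. Greedily choosing uncovered witnesses produces $x_1,\dots,x_{C+1}\in V_{\vec z}$ with $x_{k+1}$ not $\approx$-equivalent to any word $\eqe x_i$ for $i\le k$. Taking $C$ to be the number of $\phi$-types in $S$, two of these $C+1$ witnesses, say $x_i$ and $x_j$, share a common tail-type under a Ramsey factorisation of $x_i\otimes\vec z$ and $x_j\otimes\vec z$ into blocks whose $\phi$-image is a fixed idempotent. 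This shared idempotent type is the link that lets me splice the two words at the block boundaries while staying inside $V_{\vec z}$ (acceptance of $\Phi$ depends only on the $\phi$-type). Performing such splices independently over the blocks, indexed by subsets $A\subseteq\Nat$, yields continuum-many words $y_A\in V_{\vec z}$, exactly as in the proof of Lemma~\ref{lemma_KL}. The incompatibility $x_j\not\approx x_i'$ for $x_i'\eqe x_i$, pushed through the transitivity of $\approx$ together with the absorption property of semigroups recognising transitive relations (Lemma~\ref{lemma_ab}(3)), should force $y_A\approx y_B$ to imply $A\eqe B$. Since $\Nat$ has uncountably many equal-ends classes of subsets, this gives uncountably many $\approx$-classes inside $V_{\vec z}$, contradicting~(1).

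The main obstacle is precisely the last step: verifying that distinct splicing patterns yield $\approx$-inequivalent words and that all of them simultaneously remain in $V_{\vec z}$. This is where the algebraic input is indispensable — the absorption property of the $\omega$-semigroup recognising the transitive relation $\approx$ must be invoked to rule out that $\approx$ silently re-identifies words differing only in their spliced tails, and it is also what pins down a uniform $C$ independent of $\vec z$. I expect this to be the technically demanding part, whereas the splicing that keeps words in $V_{\vec z}$ is routine once everything is recognised in one semigroup. Finally I would note that the displayed first-order sentence over $(\frakA,\approx,\eqe)$ merely repackages the per-$\vec z$ equivalence, the satisfiability clause of~(1) matching the nonemptiness forced by $\bigwedge_i\varphi(x_i,\vec z)$, so that the characterisation is itself expressible in the $\omega$-automatic extension of $\frakA$ by $\eqe$.
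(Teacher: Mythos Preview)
Your outline is essentially the paper's own proof: contrapositive, a Ramsey step yielding a common homogeneous factorisation, a pigeonhole among the resulting idempotent types, then shuffling along the factorisation and using transitivity of $\approx$ (via the absorption properties of Lemma~\ref{lemma_ab}) to show the shuffles are pairwise $\approx$-inequivalent. You have also correctly located the hard part.

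There is, however, one concrete sharpening you are missing and which determines the value of $C$. Matching only the $\Phi$-type of the blocks of $x_i\otimes\vec z$ is not enough: to carry out the coarsening and the shuffling you also need the \emph{diagonal} $\approx$-type of the blocks of $x_i\otimes x_i$ to coincide for the two chosen witnesses (in the paper's notation, $t_{(1,1)}=t_{(2,2)}$ in addition to $s_1=s_2$). This equality is what makes the prefix $\otimes(y_j,y_j)[0,h_2)^\approx$ absorb $r=t_{(1,1)}$ and is used again in the computation showing $y_1\not\approx y_2$. Consequently the Ramsey colouring must be applied to \emph{all} $C+1$ words simultaneously (recording both the $\varphi$-values of $\otimes(x_i,\vec z)[n,m)$ and the $\approx$-values of $\otimes(x_i,x_j)[n,m)$), and the pigeonhole is on pairs of idempotents, giving $C=c^2$ rather than the number of $\phi$-types. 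Your phrasing ``share a common tail-type under a Ramsey factorisation of $x_i\otimes\vec z$ and $x_j\otimes\vec z$'' suggests you have only the $\Phi$-type in view; without the diagonal $\approx$-type the absorption identities of Lemma~\ref{lemma_ab}(3) do not follow and the non-$\approx$-equivalence of the shuffled words cannot be established.
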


\begin{proof}
Suppose $\frakd$, $\frakA$, and $\varphi$ are given.  Define $C$ to be $c^2$,
where $c$ is the size of the largest $\omega$-semigroup corresponding to any of
the given automata (from the presentation or corresponding to $\varphi$).  Now
fix parameters $\vec{z}$. From now on, $\approx$ denotes the equivalence
relation $\approx$ restricted to domain $\varphi(-,\vec{z})$.

$2 \rightarrow 1$: Condition $2$ and the fact that every $\eqe$-class is
countable imply that all words satisfying $\varphi(-,\vec{z})$ are contained  
in a countable number of $\approx$-classes.

$1 \rightarrow 2$: We prove the contra-positive in three steps. 

%

If $\varphi(-,\vec{z})$ is satisfiable then the negation of condition $2$ 
implies that there are $C+1$ many words $x_0, \ldots, x_C$ each satisfying 
$\varphi(-,\vec{z})$, and so that for $i,j \leq C$, $i \neq j$, the $\approx$-class 
of $x_j$ does not meet the $\eqe$-class of $x_i$. In particular, the $x_i$s are 
pairwise~$\not\eqe$. 

The plan is to produce uncountably many pairwise non-$\approx$ words that
satisfy $\varphi(-,\vec{z})$. In the first 'Ramsey step', similar to
what is done in \cite{KL06}, we find two words from the given $C$ many,
say $x_1,x_2 \in \Sigma^\ast$, and a factorisation $H \subset \Nat$ so that
both words behave the same way along the factored sub-words with
respect to the $\approx$- and $\varphi$-semigroups. 
In the second 'Coarsening step' we identify a technical property of finite
semigroups recognising transitive relations. This allows us to produce
an altered factorisation $G$ and new, well-behaving words $y_1,y_2$. 
In the final step, the new words are 'shuffled along $G$'
to produce continuum many pairwise non-$\approx$ words, each satisfying
$\varphi(-,\vec{z})$.

\vskip-0.3cm
\subsection{Ramsey step}

This step effectively allows us to discard the parameters $\vec{z}$.
Before we use Ramsey's theorem, we introduce a convenient notation 
to talk about factorisations of words.

\begin{definition}
Let $A = a_1 < a_2 < \cdots$ be any subset of $\Nat$ and $h:\Sigma^\ast \to S$ 
be a morphism into a finite semigroup $S$.
For an $\omega$-word $\alpha \in \Sigma^\omega$, and element $e \in S$, 
say that {\em $A$ is an $h,e$-homogeneous factorisation of $\alpha$} if
for all $n \in \Nat^+$, $h\big(\,\alpha[a_n,a_{n+1})\,\big) = e$.  
\end{definition}

\noindent Observe that
\begin{enumerate}
\item if $A$ is an $h,s$-homogeneous factorisation of $\alpha$ and $k \in \Nat^+$ 
      then the set $\{a_{ki}\}_{i \in \Nat^+}$ is an $h,{s^k}$-homogeneous factorisation 
      of $\alpha$.
\item if $A$ is an $h,e$-homogeneous factorisation of $\alpha$ and $e$ is idempotent, 
      then every infinite $B \subset A$ is also an $h,e$-homogeneous factorisation 
      of $\alpha$.
\end{enumerate}

\noindent 
In the following we write $w^\varphi$ and $w^\approx$ to denote the image 
of $w$ under the semigroup morphism into the finite semigroup associated 
to $\varphi$ and $\approx$, respectively, as determined by the presentation. 
Accordingly, we will speak of e.g. $\varphi,s_i$-homogeneous factorisations.

Let us now colour every $\{n,m\} \in [\Nat]^2$, say $n < m$, by the tuple 
of $\omega$-semigroup elements
\[
\langle \
\big( \otimes(x_i,\vec{z})[n,m)^\varphi \, \big)_{0 \leq i \leq C}\ , \
\big( \otimes(x_i,x_j)[n,m)^\approx \, \big)_{0 \leq i \leq j \leq C} \
\rangle.
\]
By Ramsey's theorem there exists infinite $H \subset \Nat$ and a tuple of
$\omega$-semigroup elements 
\[
\left<
(s_i)_{1 \leq i \leq C}, 
(t_{(i,j)})_{1 \leq i \leq j \leq C}
\right>
\]
so that for all $0 \leq i \leq j \leq C$,
\begin{itemize}
\item $H$ is a $\varphi,s_i$-homogeneous factorisation of the word $\otimes(x_i,\vec{z})$, 
\item $H$ is $\approx,t_{(i,j)}$-homogeneous factorisation of the word $\otimes(x_i,x_j)$. 
\end{itemize}

Note that by virtue of additivity of our colouring and Ramsey's theorem 
each of the $s_i$ and $t_{(i,j)}$ above are idempotents.
Note that since there are at most $c$-many $s_i$s and $c$-many $t_{(i,i)}$s
there are at most $c^2$ many pairs $(s_i,t_{(i,i)})$ and so there must be two
indices, we may suppose $1$ and $2$, with $s_1 = s_2$ and $t_{(1,1)} =
t_{(2,2)}$. 


\vskip-0.3cm
\subsection{Coarsening step}

For technical reasons we now refine $H$ and alter $x_1, x_2$ so that the
semigroup elements have certain additional properties.

To start with, using the fact that $x_1 \not\eqe x_2$ and our observation
on coarsenings, we assume without loss of generality that $H$ is coarse enough
so that $x_1[h_n,h_{n+1}) \neq x_2[h_n,h_{n+1})$ for all $n \in \Nat$.

\begin{lemma} \label{lemma_ab}
There exists a subset $G \subset H$, listed as $g_1 < g_2 <
\cdots$, and $\omega$-words $y_1,y_2$ with the following properties:
\begin{enumerate}

\item The words $y_1$ and $y_2$ are neither $\approx$-equivalent 
      nor $\eqe$-equivalent, and each satisfies $\varphi(-,\vec{z})$.

\item There exists an idempotent $\varphi$-semigroup element $s$ 
      such that $G$ is a $\varphi,s$-homogeneous factorisation 
      for each of $\otimes(y_1,\vec{z})$ and $\otimes(y_2,\vec{z})$. 

\item There exist idempotent $\approx$-semigroup elements
      $t,t^\uparrow,t^\downarrow$ so that for $y_j \in \{y_1,y_2\}$ 
\begin{itemize}
\item both $t^\uparrow$ and $t^\downarrow$ absorb $t$
\item $\otimes(y_j,y_j)[0,g_1)^\approx$ absorbs $t$
\item $G$ is an $\approx,t$-homogeneous factorisation of $\otimes(y_j,y_j)$
\item $G$ is an $\approx,t^\uparrow$-homogeneous factorisation of $\otimes(y_1,y_2)$
\item $G$ is an $\approx,t^\downarrow$-homogeneous factorisation of $\otimes(y_2,y_1)$.
\end{itemize}
\end{enumerate}
\end{lemma}

\begin{proof}
Define $\omega$-words 
$y_1:=x_2[0,h_2)x_1[h_2,\infty)$, and $y_2$ by
\begin{eqnarray*}
y_2[0,h_2) & := & x_2[0,h_2) \text{ and}\\
y_2[h_{2n},h_{2n+2}) & := & x_2[h_{2n},h_{2n+1}) x_1[h_{2n+1},h_{2n+2}) \text{ for } n > 0.
\end{eqnarray*}

\noindent
{\em Item 1.} Clearly, $y_1 \not\eqe y_2$ and
each $y_j \in \{y_1,y_2\}$ satisfies $\varphi(y_j,\vec{z})$ since by homogeneity 
and $s_1 = s_2$
\begin{eqnarray*}
\otimes(y_1,\vec{z})^\varphi 	& = &	\otimes(x_2,\vec{z})[0,h_2)^\varphi s_1^\omega \\
				& = &	\otimes(x_2,\vec{z})[0,h_2)^\varphi s_2^\omega \\
				& = &	\otimes(x_2,\vec{z})^\varphi
\end{eqnarray*}
and similarly
\begin{eqnarray*}
\otimes(y_2,\vec{z})^\varphi 	& = &	\otimes(x_2,\vec{z})[0,h_2)^\varphi (s_2 s_1)^\omega \\
				& = &	\otimes(x_2,\vec{z})[0,h_2)^\varphi s_2^\omega \\
				& = &	\otimes(x_2,\vec{z})^\varphi
\end{eqnarray*}

Next we check that $y_1 \not \approx y_2$. 
\begin{eqnarray*}
\otimes(y_1,y_2)^\approx	
	& = & 	\pi_\approx \big( 
		\otimes(x_2,x_2)[0,h_2)^\approx,\, 
		\big( \otimes(x_1,x_2)[h_{2n},h_{2n+1})^\approx,\, 
		 \otimes(x_1,x_1)[h_{2n+1},h_{2n+2})^\approx
		\big)_{n  \in \Nat^+}
		\big)\\
	& = & 	\otimes(x_2,x_2)[0,h_1)^\approx \, t_{(2,2)} \ 
		(t_{(1,2)}t_{(1,1)})^{\omega}\\
	& = & 	\otimes(x_2,x_2)[0,h_1)^\approx \,  
		t_{(2,2)}t_{(2,2)} \  
		(t_{(1,2)}t_{(1,1)})^{\omega}\\
	& = & 	\otimes(x_2,x_2)[0,h_1)^\approx \,  
		t_{(2,2)}t_{(2,2)} \  
		(t_{(1,2)}t_{(2,2)})^{\omega}\\
	& = & 	\otimes(x_2,x_2)[0,h_1)^\approx \,  
		t_{(2,2)} \  
		(t_{(2,2)} t_{(1,2)})^{\omega}\\
	& = & 	\pi_\approx \big( 
		\otimes(x_2,x_2)[0,h_2)^\approx,\, 
		\big(\otimes(x_2,x_2)[h_{2n},h_{2n+1})^\approx,\, 
		 \otimes(x_1,x_2)[h_{2n+1},h_{2n+2})^\approx
		\big)_{n  \in \Nat^+}
		\big)\\
	& = & 	\otimes(y_2,x_2)^\approx	
\end{eqnarray*}

Thus, if $y_1 \approx y_2$ then also $y_2 \approx x_2$ and so {\bf by transitivity}
$y_1 \approx x_2$. But since $y_1 \eqe x_1$, the $\approx$-class of $x_2$
meets the $\eqe$-class of $x_1$, contradicting the initial choice of the $x_i$s. \\

\noindent
{\em Items 2 and 3.} 
Define intermediate semigroup elements $q:= s_1$, $r := t_{(1,1)}$, $r^\uparrow
:= t_{(1,2)} t_{(1,1)}$ and $r^\downarrow := t_{(2,1)} t_{(1,1)}$.  
Then
\begin{enumerate}
\item both $r^\uparrow$ and $r^\downarrow$ absorb $r$, since $t_{(1,1)}$ is idempotent;
\item $\otimes(y_j,y_j)[0,h_2)^\approx = \otimes(y_j,y_j)[0,h_1)^\approx t_{(2,2)}$ 
      and thus absorbs $r$ (for $y_j \in \{y_1,y_2\}$). 
\end{enumerate}

\noindent
In this notation, for all $i \in \Nat^+$ and $y_j \in \{y_1,y_2\}$,
\begin{itemize}

\item $\otimes(y_j,\vec{z})[h_{2i},h_{2i+2})^\varphi$ is	 $q q = q$,

\item $\otimes(y_j,y_j)[h_{2i},h_{2i+2})^\approx$ is 	 $r r  = r$,

\item $\otimes(y_1,y_2)[h_{2i},h_{2i+2})^\approx$ is 	 $t_{(1,2)} t_{(1,1)}  =  r^\uparrow$,

\item $\otimes(y_2,y_1)[h_{2i},h_{2i+2})^\approx$ is 	 $t_{(2,1)} t_{(1,1)}  =  r^\downarrow$.\\

\end{itemize}

%
%
%
%

Finally, define the set $G:=\{h_{2ki}\}_{i > 1}$, i.e. $g_i = h_{2k(i+1)}$, 
and the semigroup elements $t:= r^k$, $t^\uparrow := (r^\uparrow)^k$, 
$t^\downarrow := (r^\downarrow)^k$ and $s:= q^k$. 
The extra multiple of $k$ (defined as the product of the exponents of the give
semigroups for $\eqe$ and $\approx$) ensures all these semigroup elements (in
particular $t^\uparrow$ and $t^\downarrow$) are idempotent.
We now verify the absorption properties:
\begin{eqnarray*}
  t^\uparrow t = r^{\uparrow k} r^k = r^{\uparrow k} = t^{\uparrow} & 
    \quad \text{because } r^{\uparrow} \text{ absorbs } r 
\end{eqnarray*}

Similarly, $t^\downarrow t$ absorbs $t$. Further, since $g_1 = h_{4k}$, we have 
\begin{eqnarray*}
\otimes(y_j,y_j)[0,g_1)^{\approx} 
		& = & \otimes(y_j,y_j)[0,h_2)^{\approx} \otimes(y_j,y_j)[h_2,h_{4k})^\approx\\
		& = & \otimes(y_j,y_j)[0,h_2)^{\approx} r^{4k-2}\\
		& = & \otimes(y_j,y_j)[0,h_2)^{\approx} r^{3k-2}t\\
\end{eqnarray*}
and thus absorbs $t$. 

Finally we verify the homogeneity properties:
$G$ is an $\approx,t^\downarrow$-homogeneous factorisation of $\otimes(y_2,y_1)$ since for $i \in \Nat^+$
\begin{eqnarray*}
  \otimes(y_2,y_1)[g_i,g_{i+1})^\approx 	
      \ = \  \otimes(y_2,y_1)[h_{2k(i+1)},h_{2k(i+2)})^\approx  
      \ = \  (r^\downarrow)^k 
      \ = \  t^\downarrow.
\end{eqnarray*}
The other cases are similar.
\end{proof}

\vskip-0.3cm
\subsection{Shuffling step}

We continue the proof of Proposition \ref{prop_count} by 'shuffling' the
words $y_1$ and $y_2$ along $G$ resulting in continuum many pairwise distinct
words that are pairwise not $\approx$-equivalent, each satisfying
$\varphi(-,\vec{z})$.
To this end, define for $S \subset \Nat^+$ the 'characteristic word' $\chi_S$ by 
\begin{eqnarray*}
\chi_S[0,g_1) & := &y_2[0,g_1) \text{ , and }\\
\chi_S[g_n,g_{n+1}) & := &
\begin{cases} 	
		y_2[g_n,g_{n+1}) & \text{ if } n \in S \\
		y_1[g_n,g_{n+1}) & \text{ otherwise}
\end{cases}
\end{eqnarray*}

First note that $\frakA \models \varphi(\chi_S,\vec{z})$. 
Indeed, by Lemma \ref{lemma_ab} item 2
\begin{eqnarray*}
\otimes(\chi_S,\vec{z})^\varphi 	& = & \otimes(y_2,\vec{z})[0,g_1)^\varphi s^\omega\\
				& = & \otimes(y_2,\vec{z})^\varphi 
\end{eqnarray*}
and $\frakA \models \varphi(y_2,\vec{z})$ by Lemma \ref{lemma_ab} item 1. 
Moreover, for $S \not \eqe T$ the construction gives that $\chi_S \not \eqe \chi_T$.
This is due our initial choice of $x_1 \not \eqe x_2$ and the assumption that
the factorisation $(h_n)_n$ is coarse enough so that $x_1[h_n,h_{n+1}) \neq x_2[h_n,h_{n+1})$
and therefore also $y_1[g_n,g_{n+1}) \neq y_2[g_n,g_{n+1})$ for all $n$.

The following two lemmas establish that if $S \not \eqe T$ 
then $\chi_S \not \approx \chi_T$. 

Write $x_{\circ\bullet}$ for the word $\chi_{2\Nat^+}$, and $x_{\bullet\circ}$ for $\chi_{2\Nat^+ -1}$
and let $p$ denote $\otimes(y_2,y_2)[0,g_1)^\approx$.

\begin{lemma} \label{lemma_shuffle}
For all $S \not \eqe T$, 
\[
\otimes(\chi_S,\chi_T)^\approx= 
\begin{cases}
\otimes(x_{\circ\bullet},x_{\bullet\circ})^\approx & \text{ or }\\
\otimes(x_{\bullet\circ},x_{\circ\bullet})^\approx
\end{cases}
\]
\end{lemma}
\begin{proof}

Define semigroup-elements $p_n$ for $n \in \Nat$ by 
 		
\[
p_n := 
		\begin{cases}	
		t^\downarrow & \text{ if } n \in S \setminus T \\
		t^\uparrow   & \text{ if } n \in T \setminus S \\
		t	     & \text{ otherwise}
		\end{cases}
\]

Let $m$ be the smallest number in $S \triangle T$. Suppose that $m \in S \setminus T$.
Because both $t^\uparrow$ and $t^\downarrow$ are idempotent and 
since $t$ is absorbed by both $p$, $t^\uparrow$ and $t^\downarrow$ we have
\begin{eqnarray*}
\otimes(\chi_S,\chi_T)^\approx	
	& = & 	\pi_\approx \left( 
		p, \, 
		(p_n)_{n \in \Nat}
		\right) 
	\  = \ 	p (t^\downarrow t^\uparrow)^\omega \\
	& = & 	\otimes(x_{\bullet\circ},x_{\circ\bullet})^\approx
\end{eqnarray*}
and the case that $m \in T \setminus S$ similarly results 
in $\otimes(x_{\circ\bullet},x_{\bullet\circ})^\approx$.
\end{proof}

\begin{lemma}
$x_{\circ\bullet} \not \approx x_{\bullet\circ}$. 
\end{lemma}

\begin{proof}
Define an intermediate word $x_{\circ\bullet\circ\circ}:=\chi_{4\Nat^+ - 2}$.
By computations similar to the above we find that
\begin{eqnarray*}
\otimes(x_{\bullet\circ},x_{\circ\bullet\circ\circ})^\approx	
	& = & 	p (t^\downarrow t^\uparrow t^\downarrow t)^\omega 
	\ = \ 	p (t^\downarrow t^\uparrow t^\downarrow)^\omega 
	\ = \  	p (t^\downarrow t^\uparrow)^\omega \\
	& = & 	\otimes(x_{\bullet\circ},x_{\circ\bullet})^\approx 
\end{eqnarray*}
and
\begin{eqnarray*}
\otimes(x_{\circ\bullet},x_{\circ\bullet\circ\circ})^\approx	
	& = & 	p (t t t t^\downarrow)^\omega 
	\ = \  	p (t^\downarrow)^\omega \\
	& = & 	\otimes(y_2,y_1)^\approx 
\end{eqnarray*}

Therefore, if $x_{\bullet\circ} \approx x_{\circ\bullet}$ then 
also $x_{\bullet\circ} \approx x_{\circ\bullet\circ\circ}$ and so 
{\bf by symmetry} and {\bf by transitivity} 
$x_{\circ\bullet} \approx x_{\circ\bullet\circ\circ}$. 
But in this case also $y_2 \approx y_1$, contradicting Lemma \ref{lemma_ab} item 1. 
\end{proof}

\noindent There are continuum many classes in $\calP(\mathbb{N})/\sim_e$,
thus there is a continuum of pairwise not $\approx$-equivalent 
words $\chi_S$ each satisfying $\varphi(-,\vec{z})$.
This completes the proof of Proposition~\ref{prop_count}.
\end{proof}

\vskip-0.7cm
\section{Consequences} 

\noindent
{\bf Theorem \ref{thrm_foc}}
{\it The statements of Theorem \ref{thrm_fo} hold true for $\FOC$ 
     over all (not necessarily injective) $\omega$-automatic presentations.}

\begin{proof}
We prove item $(i)$ from which the rest of the theorem 
follows immediately. We inductively eliminate occurrences of cardinality
and modulo-counting quantifiers in the following way.

The countability quantifier $\exists^{\leq \aleph_0}$ and uncountability
quantifier $\exists^{>\aleph_0}$ can be eliminated (in an extension of the
presentation by $\eqe$) by the formula given in Proposition \ref{prop_count}.  

For the remaining quantifiers we further expand the presentation with the 
$\omega$-regular relations
\begin{itemize}
\item $\pi(a,b,c)$ saying that $a \eqe b \eqe c$ and the last position where $a$
differs from $c$ is no larger than the last position where $b$ differs from
$c$, and
\item $\lambda(a,b,c)$ saying that $\pi(a,b,c)$ and $\pi(b,a,c)$ and, writing
$k$ for this common position, the word $a[0,k]$ is lexicographically smaller
than the word $b[0,k]$.
\end{itemize}

Now $\exists^{<\infty} . \, \varphi(x,\vec{z})$ is equivalent to

\[
\exists x_1 \cdots x_C \, \Psi(x_1,\cdots,x_C,\vec{z})
\]
where $\Psi$ expresses that $x_1, \cdots x_C$ satisfy  $\varphi(-,\vec{z})$ and
there exists a position, say $k \in \Nat$, so that every $\approx$-class contains a
word satisfying $\varphi(-,\vec{z})$ that coincides with one of the $x_i$ from
position $k$ onwards. This additional condition can be expressed
by 
\[
\exists y_1 \cdots y_C 
  \forall x 
    \exists y 
	\left( 
           \varphi(x,\vec{z})  \limp
	       x \approx y  \land  \bigvee_i \pi(y,y_i,x_i)
	\right)
\]

Consequently, $\exists^{(r\,\mathrm{mod}\,m)} x \, . \, \varphi(x,\vec{z})$ can be
eliminated since we can pick out unique representatives of the
$\approx$-classes as those $x$ so that, writing $i(w)$ for the smallest index $i$ 
for which $w \eqe x_i$, for every $y \neq x$ in the same $\approx$-class as $x$,
either 
\begin{itemize}
\item $i(x) < i(y)$, or
\item $i(x) = i(y)$ and $\lambda(x,y,x_{i(x)})$. 
\end{itemize} 
Now we can apply the construction of \cite{KL06} or \cite{KRS04} for elimination 
of the $\exists^{(r\,\mathrm{mod}\,m)}$ quantifier.
\end{proof}

As a corollary of Proposition~\ref{prop_count} we obtain that for every 
omega-regular equivalence with countably many classes a set of unique representants 
is definable.

\begin{corollary}
Let $\approx$ be an $\omega$-automatic equivalence relation on $\Sigma^\omega$.
There is a constant $C$, depending on the presentation, so that 
the following are equivalent:
\begin{enumerate}
\item $\approx$ has countably many equivalence classes.
\item there exist $C$ many $\eqe$-classes so that every $\approx$-class 
has non-empty intersection with at least one of these $C$. 
\end{enumerate}
In this case there is an $\omega$-regular set of representatives of $\approx$.
Moreover an automaton for this set can be effectively found given an automaton
for $\approx$.
\end{corollary}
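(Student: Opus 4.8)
The plan is to read off the equivalence of the two conditions directly from Proposition~\ref{prop_count}, and then to build the transversal by the same representative-selection that drives the proof of Theorem~\ref{thrm_foc}. For the first part I instantiate Proposition~\ref{prop_count} with the trivially true formula $\varphi(x):=(x\eqe x)$ and no parameters $\vec{z}$, applied to the presentation whose accepting automaton recognises all of $\Sigma^\omega$ and whose congruence is the given $\approx$. With this choice, condition $(1)$ of the proposition becomes exactly ``$\approx$ has countably many classes'', while condition $(2)$ becomes ``there are $C$ words $x_1,\dots,x_C$ such that every $x$ is $\approx$-equivalent to some $y\eqe x_i$'', i.e. the $C$ classes $[x_1]_{\eqe},\dots,[x_C]_{\eqe}$ meet every $\approx$-class. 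This is precisely the stated equivalence, with the same constant $C$.

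For the second part I assume the countability condition holds and fix witnesses as in $(2)$. The key point I would establish first is that these witnesses may be taken ultimately periodic: the covering property $\forall x\,\bigvee_i \exists y\,(x\approx y \wedge y\eqe x_i)$ is an $\FO$-formula over $(\Sigma^\omega,\approx,\eqe)$, hence by Theorem~\ref{thrm_fo}(i) it defines an $\omega$-regular relation in $x_1,\dots,x_C$; being nonempty by $(2)$, it contains an ultimately periodic tuple, which I extract effectively from a B\"uchi automaton for it. I then define the representative set $R$ exactly as in the proof of Theorem~\ref{thrm_foc}: a word $x$ belongs to $R$ iff $x\eqe x_i$ for some $i$ and, writing $i(w)$ for the least such index, every $y\approx x$ with $y\neq x$ satisfies either $i(x)<i(y)$, or $i(x)=i(y)$ together with $\lambda(x,y,x_{i(x)})$, where $\pi,\lambda$ are the auxiliary $\omega$-regular relations introduced there. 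Since the $x_i$ are ultimately periodic, this is an $\FO$-definition over $(\Sigma^\omega,\approx,\eqe,\pi,\lambda)$ with ultimately periodic parameters, so $R$ is $\omega$-regular and an automaton for it is obtained effectively.

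It then remains to check that $R$ is a genuine transversal, which is where the real content lies. Coverage guarantees that every $\approx$-class $K$ meets some $[x_i]_{\eqe}$, so $i(K):=\min\{\,i : K\cap[x_i]_{\eqe}\neq\emptyset\,\}$ is well defined, and every candidate in $K\cap[x_{i(K)}]_{\eqe}$ has $i$-value exactly $i(K)$ by minimality. Within the single class $[x_{i(K)}]_{\eqe}$ the relations $\pi$ and $\lambda$ order words by (last position of disagreement with $x_{i(K)}$, then lexicographically on the common prefix); I would verify this is in fact a \emph{well}-order, since a descending chain has non-increasing, hence eventually constant, disagreement position, and only finitely many words share a given disagreement position. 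Hence $K\cap[x_{i(K)}]_{\eqe}$ has a unique minimum, and this minimum is the single element of $R\cap K$, while words lying in no $[x_i]_{\eqe}$ are never selected. The main obstacle I anticipate is exactly this last argument together with the reduction to ultimately periodic witnesses: the selection is only canonical once the reference words are pinned down, because the $\pi,\lambda$-order depends on the reference, and $\omega$-regularity of $R$ requires those references to be ultimately periodic rather than arbitrary elements of the covering $\eqe$-classes.
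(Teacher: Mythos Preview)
Your argument is correct, and the first half coincides with the paper's: the equivalence of (1) and (2) is read off as the specialisation of Proposition~\ref{prop_count} to the trivial formula, and the ultimately periodic witnesses $x_1,\dots,x_C$ are extracted from the nonempty $\omega$-regular covering relation just as you describe.

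For the transversal the paper takes a slightly different route. Instead of reusing the $\pi,\lambda$-ordering on the individual $\eqe$-classes $[x_i]_{\eqe}$, it passes to the larger $\omega$-regular set $B=\bigcup_i \Sigma^{\ast}(u_i)^{\omega}$ of \emph{all} ultimately periodic words with one of the witnessed periods (writing $x_i=v_i u_i^{\omega}$), and well-orders $B$ by comparing first the minimal period and then the minimal transient, both length-lexicographically; the representatives are the $\prec$-minima of the $\approx$-classes in $B$. Your approach has the virtue of recycling exactly the machinery already introduced for Theorem~\ref{thrm_foc}; the paper's ordering is a bit cleaner in that it does not depend on the particular reference points $x_i$ within their $\eqe$-classes. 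One small technical slip in your formalisation: your selection clause quantifies over all $y\approx x$, but $i(y)$ is undefined when $y$ lies in no $[x_j]_{\eqe}$, and $\lambda(x,y,c)$ alone only compares words with the \emph{same} last-disagreement position from $c$; to match the prose order you (correctly) describe, add the disjuncts ``$y\not\eqe x_j$ for every $j$'' and ``$\pi(x,y,x_{i(x)})\land\lnot\pi(y,x,x_{i(x)})$''.
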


\begin{proof}
The first two items are simply a specialisation of Proposition \ref{prop_count}.
We get the representatives as follows.

Write $A$ for the domain of $\approx$ and 
consider the formula $\psi(x_1,\cdots,x_C)$ with free variables $x_1,\cdots,x_C$:
\[
\bigwedge_i x_i \in A \land (\forall x \in A) (\exists y)\, [x \approx y \land \bigvee_i y \eqe x_i]
\]

The relation defined by $\psi$ is $\omega$-regular since it is a first order
formula over $\omega$-regular relations. By assumption it is non-empty. Thus it
contains an ultimately periodic word of the form $\otimes(a_1,\cdots,a_C$).
Thus each of these $a_i$s is ultimately periodic; say $a_i = v_i (u_i)^\omega$.

Then every $x$ has an $\approx$-representative in $B:=\bigcup_i \Sigma^{\ast}
(u_i)^\omega$. It remains to prune $B$ to select unique representatives for
each $\approx$-class.  

It is easy to construct an $\omega$-regular well-founded linear
order on $B$.  For every $w \in B$, let $p(w) \in \Sigma^\ast$ be the
length-lexicographically smallest word such that $w$ has period $p(w)$. Also
let $t(w) \in \Sigma^\ast$ be the length-lexicographically smallest word so
that $w = t(w) \cdot p(w)^{\omega}$. Define an order $\prec$ on $B$ by $w \prec
w'$ if $p(w)$ is length-lexicographically smaller than $p(w')$, or otherwise if
$p(w) = p(w')$ and $t(w)$ is length-lexicographically smaller than $t(w')$.
The ordering $\prec$ is $\omega$-regular since it is $\FO$-definable in terms
of $\omega$-regular relations. Finally, the required set of representatives may be 
defined as the set of $\prec$-minimal elements of every $\approx$-class;
and an automaton for this set can be constructed from an automaton for 
$\approx$.
\end{proof}

This immediately yields an \emph{injective} $\omega$-automatic presentation
from a given $\omega$-automatic presentation which by Proposition
\ref{prop_cnt_inj} can be transformed into an automatic presentation of the
structure. Thus we conclude that every countable $\omega$-automatic structure
is already automatic. \\[0.5 em]

\noindent
{\bf Corollary \ref{coroll_cnt}}
{\it A countable structure is $\omega$-automatic if and only if it is automatic.
Transforming a presentation of one type into the other can be done effectively.
}\\[0.5 em]

Note that some of our technical results, in particular Lemmas \ref{lemma_ab}
and \ref{lemma_shuffle}, only require transitivity of the relation $\approx$
and do not use symmetry. Applying them to an $\omega$-automatic linear order
$\prec$ we get an interesting uncountable set of words of the form
$\chi_S, S \subseteq \Nat$.
For any two such words with $S \not \eqe T$, whether $\chi_S \prec \chi_T$ or not
depends only on the first position $m \in S \triangle T$.
Thus, $\prec$ behaves like the lexicographic order on such words.

\vskip-0.3cm
\subsection{Failure of L\"owenheim-Skolem theorem for $\omega$-automatic structures}

While so far the area of automatic structures has mainly focused on
individual structures, it is interesting to look at their theories as well.
We note a consequence of our work for 'automatic model theory'. 

An automatic version of the Downward L\"owenheim-Skolem Theorem would
say that every uncountable $\omega$-automatic structure has a countable
elementary substructure that is also $\omega$-automatic.  Unfortunately this is
false since there is a first-order theory with an $\omega$-automatic model but
no countable $\omega$-automatic model.  Indeed, consider the first-order theory
of atomless Boolean Algebras. Kuske and Lohrey \cite{KL06} have observed that
it has an uncountable $\omega$-automatic model, namely $(\calP(\Nat), \cap,
\cup, \neg)/\eqe$.  However, Khoussainov et al. \cite{KNRS04} show that the
countable atomless Boolean algebra is not automatic and so, by Corollary~\ref{coroll_cnt}, 
neither $\omega$-automatic.


Here is the closest we can get to an automatic Downward L\"owenheim-Skolem
Theorem for $\omega$-automatic structures.

\begin{proposition} \label{prop_LS_omegaAS}
Let $(D,\approx,\{R_i\}_{i\leq\omega})$ be 
an omega-automatic presentation of $\frakA$ and let $\frakA_{\textrm{up}}$ be 
its restriction to the ultimately periodic words of $D$.
Then $\frakA_{\textrm{up}}$ is a countable elementary substructure of $\frakA$.
\end{proposition}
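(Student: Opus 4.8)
The plan is to verify that $\frakA_{\textrm{up}}$ is a countable substructure of $\frakA$ and then to establish elementarity via the Tarski--Vaught test. Countability is immediate: over a finite alphabet there are only countably many ultimately periodic words, so the set $D_{\textrm{up}}$ of ultimately periodic words of $D$ is countable, and hence so is the quotient $D_{\textrm{up}}/{\approx}$, the universe of $\frakA_{\textrm{up}}$. Since $\approx$ restricted to $D_{\textrm{up}}$ is simply the restriction of $\approx$, the natural map $D_{\textrm{up}}/{\approx} \to D/{\approx}$ is injective and identifies $\frakA_{\textrm{up}}$ with the induced substructure of $\frakA$ on those elements possessing an ultimately periodic representative; as all our structures are relational, this is automatically a substructure, the relations of $\frakA_{\textrm{up}}$ being exactly the restrictions of those of $\frakA$.

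For elementarity it suffices, by the Tarski--Vaught criterion, to show: for every first-order formula $\varphi(x,\vec y)$ and every tuple $\vec a$ of elements of $\frakA_{\textrm{up}}$, if $\frakA \models \exists x\, \varphi(x,\vec a)$ then $\varphi(x,\vec a)$ already has a witness in $\frakA_{\textrm{up}}$. First I would fix ultimately periodic representatives $\vec z \in D_{\textrm{up}}$ of the parameters $\vec a$. By Theorem \ref{thrm_fo}(i), the formula $\varphi$ read with $x$ and $\vec y$ free defines an $\omega$-regular relation $R = f^{-1}(\{(b,\vec c) \mid \frakA \models \varphi(b,\vec c)\})$ on $D$. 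The crucial point is that substituting the fixed parameters preserves regularity: since each component of $\vec z$ is ultimately periodic it constitutes an $\omega$-regular singleton, so the fibre
\[
 X \;=\; \{\, x \in D \mid \otimes(x,\vec z) \in R \,\}
\]
is again $\omega$-regular, being obtained from $R$ by intersecting with the $\omega$-regular set that fixes the parameter tracks to $\vec z$ and then projecting onto the $x$-track.

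The hypothesis $\frakA \models \exists x\, \varphi(x,\vec a)$ says exactly that $X \neq \emptyset$. Now I would invoke the fundamental fact that every non-empty $\omega$-regular language contains an ultimately periodic word (an accepting run of a B\"uchi automaton may be taken to be a lasso). Choosing such an $x_0 \in X$, the word $x_0$ is ultimately periodic and lies in $D$, so its class $[x_0]$ belongs to the universe of $\frakA_{\textrm{up}}$, while $x_0 \in X$ means precisely $\frakA \models \varphi([x_0],\vec a)$. This supplies the required witness inside $\frakA_{\textrm{up}}$, and since $\varphi$ and $\vec a$ were arbitrary, the Tarski--Vaught verification is complete, giving $\frakA_{\textrm{up}} \preceq \frakA$.

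I expect the only genuinely delicate point to be the $\omega$-regularity of the fibre $X$: it relies essentially on the parameters being ultimately periodic (for an arbitrary parameter tuple the fibre need not be $\omega$-regular), so the write-up must make visible that we are plugging in representatives drawn from $D_{\textrm{up}}$ rather than arbitrary elements of $D$. Everything else --- countability, the substructure claim, and the extraction of an ultimately periodic witness --- is routine given the standard theory of $\omega$-automata and Theorem \ref{thrm_fo}.
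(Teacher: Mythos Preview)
Your proposal is correct and follows essentially the same route as the paper: apply the Tarski--Vaught test, use Theorem~\ref{thrm_fo} to see that $\varphi$ defines an $\omega$-regular relation, observe that plugging in ultimately periodic parameters yields an $\omega$-regular fibre, and extract an ultimately periodic witness from this non-empty $\omega$-regular set. Your write-up is in fact more explicit than the paper's, spelling out countability, the substructure claim, and the reason the fibre stays $\omega$-regular.
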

\begin{proof} 
Relying on the Tarski-Vaught criterion for elementary substructures 
we only need to show that for all first-order formulas $\varphi(\vec{x},y)$
and elements $\vec{b}$ of $\frakA_{\textrm{up}}$
\[ 
  \frakA \models \exists y \varphi(\vec{b},y) \quad \Rightarrow \quad 
  \frakA_{\textrm{up}} \models \exists y \varphi(\vec{b},y) \ .
\]
By Theorem \ref{thrm_fo} $\varphi(\vec{x},y)$ defines an 
omega-regular relation and, similarly, since the parameters $\vec{b}$ are 
all ultimately periodic the set defined by $\varphi(\vec{b},y)$ is omega-regular.
Therefore, if it is non-empty, then it also contains an ultimately periodic word,
which is precisely what we needed.
\end{proof}

This proof can be viewed as a model construction akin to a classical
compactness proof.  Indeed, starting with ultimately constant words and
throwing in witnesses for all existential formulas satisfied in $\frakA$ in
each round one constructs an increasing sequence of substructures comprising
ultimately periodic words of increasing period lengths. The union of these is
closed under witnesses by construction. The argument is valid for relational
structures with constants assuming that every constant is represented by an
ultimately periodic word. \\

\noindent {\bf Future work} It remains to be seen whether statements analogous 
to Theorem~\ref{thrm_foc} and Corollary~\ref{coroll_cnt} also hold for automatic 
presentations over infinite trees. \\

\noindent {\bf Acknowledgment} We thank the referees for detailed technical 
remarks and corrections.



\vskip-0.3cm


\begin{thebibliography}{10}

\bibitem{Blu99}
A.~Blumensath.
\newblock Automatic structures.
\newblock {Diploma thesis, RWTH-Aachen}, 1999.

\bibitem{BG00}
A.~Blumensath and E.~Gr{\"a}del.
\newblock {Automatic Structures}.
\newblock In {\em Proceedings of 15th IEEE Symposium on Logic in Computer
  Science LICS 2000}, pages 51--62, 2000.

\bibitem{BG04}
A.~Blumensath and E.~Gr{\"a}del.
\newblock Finite presentations of infinite structures: Automata and
  interpretations.
\newblock {\em Theory of Comp. Sys.}, 37:641 -- 674, 2004.

\bibitem{HKMNman}
G.~Hj{\"o}rth, B.~Khoussainov, A.~Montalban, and A.~Nies.
\newblock Borel structures.
\newblock Manuscript, 2007.

\bibitem{Hod83}
B.R. Hodgson.
\newblock D\'ecidabilit\'e par automate fini.
\newblock {\em Ann. sc. math. Qu\'ebec}, 7(1):39--57, 1983.

\bibitem{KN95}
B.~Khoussainov and A.~Nerode.
\newblock Automatic presentations of structures.
\newblock In {\em LCC '94}, volume 960 of {\em LNCS}, pages 367--392.
  Springer-Verlag, 1995.

\bibitem{KNRS04}
B.~Khoussainov, A.~Nies, S.~Rubin, and F.~Stephan.
\newblock Automatic structures: Richness and limitations.
\newblock In {\em LICS}, pages 44--53. IEEE Comp. Soc., 2004.

\bibitem{KRS04}
B.~Khoussainov, S.~Rubin, and F.~Stephan.
\newblock Definability and regularity in automatic structures.
\newblock In {\em STACS '04}, volume 2996 of {\em LNCS}, pages 440--451, 2004.

\bibitem{KL06}
D.~Kuske and M.~Lohrey.
\newblock First-order and counting theories of $\omega$-automatic structures.
\newblock In {\em FoSSaCS}, pages 322--336, 2006.

\bibitem{PP95}
D.~Perrin and J.-E. Pin.
\newblock Semigroups and automata on infinite words.
\newblock In J.~Fountain, editor, {\em Semigroups, Formal Languages and
  Groups}, NATO Advanced Study Institute, pages 49--72. Kluwer, 1995.

\end{thebibliography}
\end{document}